\documentclass[lettersize,journal]{IEEEtran}
\usepackage{amsmath,amsfonts}
\usepackage{algorithmic}
\usepackage{algorithm}
\usepackage{array}
\usepackage[caption=false,font=normalsize,labelfont=sf,textfont=sf]{subfig}
\usepackage{textcomp}
\usepackage{stfloats}
\usepackage{url}
\usepackage{verbatim}
\usepackage{graphicx}
\usepackage{cite}
\usepackage[american]{circuitikz}
\usepackage{amsthm}
\newtheorem{theorem}{Theorem}
\newtheorem{lemma}{Lemma}

\usepackage{mathrsfs}

\usepackage[nolist]{acronym}
\begin{acronym}
    \acro{KVL}{Kirchhoff voltage law}
    \acro{SOC}{second-order-cone}
    \acro{SOCP}{second-order-cone programming}
    \acro{QCQP}{quadratically constrained quadratic programming}
    \acro{PF}{power flow}
    \acro{OPF}{optimal power flow}
    \acro{AC}{alternating current}
    \acro{SDP}{semidefinite programming}
\end{acronym}

\begin{document}

\title{On the Tightness of the Second-Order Cone Relaxation of the Optimal Power Flow with Angles Recovery in Meshed Networks}

\author{Ginevra Larroux, Matthieu Jacobs, Mario Paolone~\IEEEmembership{Fellow,~IEEE}
\thanks{The authors are with the École Polytechnique Fédérale de Lausanne (EPFL), Switzerland, email: \{ginevra.larroux,mario.paolone\}@epfl.ch.}}

\maketitle

\begin{abstract}
This letter investigates properties of the second-order cone relaxation of the optimal power flow (OPF) problem, with emphasis on relaxation tightness, nodal voltage angles recovery, and alternating-current-OPF feasibility in meshed networks. The theoretical discussion is supported by numerical experiments on standard IEEE test cases. Implications for power system planning are briefly outlined.
\end{abstract}

\begin{IEEEkeywords}
Second Order Cone Programming, Optimal Power Flow, meshed networks
\end{IEEEkeywords}

\vspace{-5mm}
\section{Introduction}\label{introduction}
Convex relaxations of the \ac{AC}-\ac{OPF} problem are widely used in power system optimization, both as computationally tractable surrogates and as tools to assess feasibility and optimality. Among these, \ac{SOC} relaxations based on branch-flow grid models have attracted particular attention because they can be solved efficiently with standard off-the-shelf \ac{SOCP}/\ac{QCQP} solvers (e.g., interior-point methods). Milestones in the development of \ac{SOC}-\ac{OPF} formulations include, in chronological order, \cite{Wu_1, Jabr_2, Farivar_3, Low_PartI, Low_PartII, Zhao}. 

A central question for any convex relaxation is \emph{tightness}, i.e., whether a solution of the relaxed problem corresponds to a feasible solution of the original problem. While \ac{SOC} branch-flow relaxations can be exact under known conditions in radial networks, cycle-induced consistency constraints on voltage angles make angle recovery and tightness nontrivial in meshed networks.

This letter focuses on two widely used \ac{SOC}-\ac{OPF} formulations: the baseline model in \cite{Low_PartI}, which eliminates bus voltage and branch current angles and relaxes the nonconvex power-loss equalities, with angle recovery considered only \emph{a posteriori}; and the formulation in \cite{Zhao}, which instead retains bus voltage angles explicitly in the network constraints. The main thesis of this letter is that, for meshed networks, the model in \cite{Zhao} is not a relaxation of the \ac{AC}-\ac{OPF} but rather an approximation; consequently, a zero relaxation gap does not, in general, guarantee \ac{AC}-\ac{OPF} feasibility in meshed networks.

The remainder of the paper is organized as follows. Section~\ref{sec:problem_formulation} introduces the two formulations and their properties. Section~\ref{sec:discussion} develops a theoretical argument and provides a counterexample to Theorem~1 in \cite{Zhao}. Section~\ref{sec:conclusions} discusses implications for planning applications and concludes the paper.

\section{Problem formulation}\label{sec:problem_formulation}
\subsection{\ac{SOC}-\ac{OPF} formulation by \cite{Low_PartI}}\label{subsec:low_partI}
For brevity, we refer to \cite{Low_PartI} for the full nomenclature; unless stated otherwise, constraints are understood to hold for all buses in $N$ and branches in $E$.
Given the power injections $s_i:=s_i^g - s_i^c$ for $i=1,\dots,n$, the \ac{PF} variables are
$x(s):=(S,I,V,s_0)$, where $S:=\{S_{ij}\}_{(i,j)\in E}$, $I:=\{I_{ij}\}_{(i,j)\in E}$, $V:=\{V_i\}_{i\in N\setminus\{0\}}$, and $s_0$ denotes the slack injection.
The \ac{AC}-\ac{PF} equations are listed in \eqref{eq:Low_PF}. 
\begin{subequations}\label{eq:Low_PF} 
    \begin{alignat}{4}
        & s_j =  \sum_{k: j \rightarrow k} S_{j k}-\sum_{i: i \rightarrow j}\left(S_{i j}-z_{i j} \lvert I_{i j}\rvert^2\right) +y_j^* \lvert V_j\rvert^2\\
        & V_i - V_j = z_{ij} I_{ij} \\
        & S_{ij} = V_iI_{ij}^* 
    \end{alignat}
\end{subequations}

In the associated \ac{OPF}, the injections $s$ are also decision variables and the objective is chosen convex and independent of phase angles, e.g.,
$f(\hat h(x),s):=\sum_{(i,j)\in E} r_{ij} |I_{ij}|^2$.
We refer to \cite{Low_PartI} for the definition of the projection $\hat h(\cdot)$.
Operational constraints include bounds on generations and loads, voltage magnitudes, and branch currents or powers. For brevity, we explicitly report only the ampacity constraint and use \eqref{eq:Low_operational_cons} to refer to \emph{all} operational constraints.
\begin{equation}\label{eq:Low_operational_cons} 
        \lvert I_{ij}\rvert \leq \overline{I}_{ij}
\end{equation}
The resulting non-approximated \ac{OPF} problem is
\begin{equation}\label{eq:Low_OPF_rewrite}
(\mathrm{OPF})\quad \min_{x,s} \ f(\hat h(x),s)\qquad\text{s.t.}\quad \eqref{eq:Low_PF},\eqref{eq:Low_operational_cons}.
\end{equation}

The relaxation in \cite{Low_PartI} is obtained in two stages. First, voltage and current angles are eliminated by replacing \eqref{eq:Low_PF} with \eqref{eq:Low_OPFar}, yielding the (OPF-ar) model, in the lifted variables $l_{ij}:=|I_{ij}|^2$ and $v_i:=|V_i|^2$. Second, the nonconvex quadratic equality \eqref{eq:Low_q_ol_equality} is relaxed into a second-order cone inequality, resulting in the convex model (OPF-cr).
\begin{subequations}\label{eq:Low_OPFar} 
    \begin{alignat}{4}
        & p_j=\sum_{k: j \rightarrow k} P_{j k}-\sum_{i: i \rightarrow j}\left(P_{i j}-r_{i j} \ell_{i j}\right)+g_j v_j\\
        & q_j=\sum_{k: j \rightarrow k} Q_{j k}-\sum_{i: i \rightarrow j}\left(Q_{i j}-x_{i j} \ell_{i j}\right)+b_j v_j\\
        & v_j=v_i-2\left(r_{i j} P_{i j}+x_{i j} Q_{i j}\right)+\left(r_{i j}^2+x_{i j}^2\right) \ell_{i j} \\
        & l_{ij}=\frac{P_{i j}^2+Q_{i j}^2}{v_i} \label{eq:Low_q_ol_equality} 
    \end{alignat}
\end{subequations}

If the hypotheses\footnote{(I) The network graph $G$ is connected; (II) the cost function is convex; (III) the cost function is strictly increasing in $l$, non-increasing in $s^c$, and independent of $S$; (IV) the optimal \ac{AC}-\ac{OPF} is feasible.} hold, the authors show that any optimal solution of (OPF-cr) is also optimal for (OPF-ar), even on meshed networks, provided there are no upper bounds on active and reactive loads. Crucially, they also note that recovering angles from an (OPF-ar) solution is always possible in radial networks, but not necessarily in meshed networks. Theorem~2 in \cite{Low_PartI} gives a condition to determine whether a branch-flow solution can be recovered from an (OPF-ar) solution, together with the corresponding recovery computation. As this is nonconvex, two angle-recovery algorithms (``centralized'' and ``distributed'') are proposed, but they are not guaranteed to succeed in meshed networks. In the related work \cite{Low_PartII}, a convexification approach for meshed networks based on phase shifters is proposed when the angle-recovery condition fails. 

On a final note, \cite{ExactConvexRelaxation_Radial} discusses alternative sufficient conditions for exactness in radial networks that do not rely on allowing load oversatisfaction (unrealistic in practice).

\subsection{\ac{SOC}-\ac{OPF} formulation by \cite{Zhao}}\label{subsec:zhao} 
Reference \cite{Zhao} proposes a formulation that explicitly includes bus voltage angles and distinguishes between \emph{measurable} branch quantities (denoted with $\tilde \cdot$) and branch-flow \ac{OPF} variables that exclude shunt contributions, enabling the correct treatment of branch current ampacity limits. 

Using the notation of \cite{Zhao}, the \ac{AC}-\ac{PF} equations are in \eqref{eq:Zhao_PF}.
\begin{subequations}\label{eq:Zhao_PF}
    \begin{alignat}{1}
        p_n-p_{d_n}=\sum_l\left(A_{nl}^{+} p_{s_l}-A_{nl}^{-} p_{o_l}\right)+G_n V_n \label{eq:Zhao_PF1} \\ 
        q_n-q_{d_n}=\sum_l\left(A_{nl}^{+} q_{s_l}-A_{nl}^{-} q_{o_l}\right)-B_n V_n \label{eq:Zhao_PF2}\\
        V_{s_l}-V_{r_l}=2 R_l p_{s_l}+2 X_l q_{s_l}-R_l p_{o_l}-X_l q_{o_l} \label{eq:Zhao_PF3}\\
        v_{s_l} v_{r_l} \sin \theta_l=X_l p_{s_l}-R_l q_{s_l}\label{eq:KVL_phase} \\
        V_n=v_n^2\\
        \theta_l= \theta_{s_l}-\theta_{r_l}\label{eq:theta_def}\\
        q_{o_l}=\frac{p_{s_l}^2+q_{s_l}^2}{V_{s l}} X_l\label{eq:reactive_pow_loss} \\
        p_{o_l} X_l=q_{o_l} R_l \label{eq:Zhao_PF5}
    \end{alignat} 
\end{subequations}
Define $A_{s_ln},A_{r_ln} \in \mathbb{R}^{|\mathscr{L}|\times|\mathscr{N}|}$ the branch-to-node incidence matrices with $A_{s(r)_ln}=1$ if node $n$ is the sending (receiving) end of branch $l$. Then \eqref{eq:theta_def} can be equivalently written as \eqref{eq:theta_def_incidence_matrices}.
\begin{equation}\label{eq:theta_def_incidence_matrices}
\theta_l = (A_{s_ln}-A_{r_ln}) \ \theta_n
\end{equation}

As above, for brevity we explicitly report only the ampacity constraint among the operational limits. Note that, in this formulation, bounds are also imposed on bus voltage angles and on branch angle differences.
\begin{equation}\label{eq:ampacity}
    \tilde{i}_{s\left(r\right)_l}^2=\frac{\tilde{p}_{s\left(r\right)_l}^2+\tilde{q}_{s\left(r\right)_l}^2}{V_{s\left(r\right)_l}} \leq \widetilde{K}_l 
\end{equation}

The corresponding non-approximated \ac{AC}-\ac{OPF} is written as
\begin{equation}\label{eq:Zhao_ACOPF}
(\text{o-ACOPF})\quad \min_{\Omega_{\text{o-ACOPF}}} \ f(\Omega_{\text{o-ACOPF}}) \qquad\text{s.t.} \quad \eqref{eq:Zhao_PF},\eqref{eq:ampacity}
\end{equation}
where $f(\cdot)$ is a convex objective function.

The proposed \ac{SOC}-\ac{OPF} formulation then modifies two constraints: the non-convex relation \eqref{eq:KVL_phase} is replaced by the linear \eqref{eq:soc_eq2d}, and the losses equality~\eqref{eq:reactive_pow_loss} is relaxed to the conic inequality~\eqref{eq:soc_eq2b}. In addition, the ampacity constraint~\eqref{eq:ampacity} is rewritten in terms of measurable quantities \eqref{eq:ampacity_meas}. Finally, a tightness-promoting constraint \eqref{eq:recover_feasibility} is introduced under the hypothesis
$(\theta_l^{\min}, \theta_l^{\max}) \subseteq \left(-\frac{\pi}{2},\frac{\pi}{2}\right)$, $\theta_l^{\min} = -\theta_l^{\max}$.
\begin{subequations}\label{eq:Zhao_SOC}
    \begin{alignat}{1}
        & \theta_l=X_l p_{s_l}-R_l q_{s_l} \label{eq:soc_eq2d} \\
        & q_{o_l}\geq \frac{p_{s(r)_l}^2+q_{s(r) l}^2}{V_{s(r)_l}} X_l \label{eq:soc_eq2b} \\
        & K_{o_l} =\left(\widetilde{K}_l-V_{s(r)_l} B_{s(r)_l}^2+2 q_{s(r)_l} B_{s(r)_l}\right) X_l \geq q_{o_ l}\label{eq:ampacity_meas} \\
        & V_{s_l} V_{r_l} \sin ^2\left(\theta_l^{\max }\right) \geq \theta_{l}^2 \label{eq:recover_feasibility}
    \end{alignat} 
\end{subequations}
The resulting \ac{SOC}-\ac{OPF} model is
\begin{equation}\label{eq:Zhao_SOCOPF}
\begin{aligned}
    (\text{SOC-ACOPF})\quad &\min_{\Omega_{\text{SOC-ACOPF}}} \ f(\Omega_{\text{SOC-ACOPF}}) \\
    &\text{s.t.}\quad \eqref{eq:Zhao_PF1}, \eqref{eq:Zhao_PF2}, \eqref{eq:Zhao_PF3}, \eqref{eq:theta_def}, \eqref{eq:Zhao_PF5}, \eqref{eq:Zhao_SOC}.
\end{aligned}
\end{equation}

Reference \cite{Zhao} provides several theoretical results on tightness and on the recovery of \ac{AC}-\ac{OPF}-feasible operating points. The theorems most relevant to this letter are recalled below.
\begin{theorem}[Theorem 1 in \cite{Zhao}]\label{th:theo1}
    Assume $(\theta_l^{\min}, \theta_l^{\max}) = \left(-\frac{\pi}{2},\frac{\pi}{2}\right)$ and $(v_n^{\min}, v_n^{\max})=(0.9,1.1)$, replacing \eqref{eq:KVL_phase} with \eqref{eq:soc_eq2d} relaxes the (o-ACOPF) problem. 
\end{theorem}
\begin{theorem}[Theorem 4 in \cite{Zhao}]\label{th:theo4}
    If the optimal solution $\Omega^*$ of the (SOC-ACOPF) model with the additional constraint \eqref{eq:recover_feasibility} gives a tight solution $q_{o_l}^*=\frac{p_{s_l}^{*2}+q_{s_l}^{*2}}{V_{s_l}^{*}}X_l$, $\forall l \in \mathcal{L}$, the global optimum solution $\Omega_0^*$ of the (o-ACOPF) model is $\Omega_0^*:=\{\Omega^* \setminus (\theta_l^*, V_n^*)\} \cup \left\{ v_{0,n}^*:= \sqrt{V_n^*}, \ \theta_{0,l}^* :=\arcsin{\left(\frac{\theta_l^*}{v_{s_l}^*v_{r_l}^*}\right)}\right\}$.
\end{theorem}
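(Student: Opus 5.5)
The plan is the standard ``feasible point of the original problem with relaxed-optimal cost'' argument. Let $\Omega^*$ be optimal for (SOC-ACOPF) with \eqref{eq:recover_feasibility} and tight cones. I will show that the candidate $\Omega_0^*$ in the statement is feasible for (o-ACOPF) and has the same objective value. If (SOC-ACOPF) is a genuine relaxation, as claimed by Theorem~\ref{th:theo1}, then $f(\Omega^*)$ is a lower bound on the optimal value of (o-ACOPF), so $\Omega_0^*$ is globally optimal. The objective is unaffected by the change of variables, because $f$ depends on neither $\theta$ nor the substitution $v_n=\sqrt{V_n}$.

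The feasibility check goes constraint by constraint through \eqref{eq:Zhao_PF}.
\begin{itemize}
\item \eqref{eq:Zhao_PF1}, \eqref{eq:Zhao_PF2}, \eqref{eq:Zhao_PF3} and \eqref{eq:Zhao_PF5} are shared with (SOC-ACOPF), so they hold verbatim.
\item $V_n=v_n^2$ holds by the definition $v_{0,n}^*=\sqrt{V_n^*}$.
\item Tightness gives exactly \eqref{eq:reactive_pow_loss}.
\item \eqref{eq:ampacity} follows from \eqref{eq:ampacity_meas} by reversing the algebra that produced \eqref{eq:ampacity_meas} from the measurable quantities, now with equality in the loss relation.
\item For \eqref{eq:KVL_phase}, constraint \eqref{eq:recover_feasibility} gives $|\theta_l^*|\le v_{s_l}^*v_{r_l}^*\sin\theta_l^{\max}$. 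Hence $\theta_l^*/(v_{s_l}^*v_{r_l}^*)\in[-\sin\theta_l^{\max},\sin\theta_l^{\max}]$, the $\arcsin$ is well defined, and $\theta_{0,l}^*\in[\theta_l^{\min},\theta_l^{\max}]$. Then $v_{s_l}^*v_{r_l}^*\sin\theta_{0,l}^*=\theta_l^*=X_lp_{s_l}^*-R_lq_{s_l}^*$ by \eqref{eq:soc_eq2d}, which is \eqref{eq:KVL_phase}.
\end{itemize}

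The step I expect to be the main obstacle is \eqref{eq:theta_def}. The recovered branch angles $\theta_{0,l}^*$ must be differences of some nodal angles $\theta_{0,n}^*$, i.e., they must lie in the range of $(A_{s_ln}-A_{r_ln})$ in \eqref{eq:theta_def_incidence_matrices}.

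\emph{Radial networks.} The incidence matrix of a spanning tree has full row rank, so I would fix the slack angle and integrate $\theta_{0,l}^*$ along the tree to obtain nodal angles. Any bounds on the nodal angles would then need to be checked separately.

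\emph{Meshed networks.} The linear quantities $\theta_l^*$ do sum to zero around every cycle, since \eqref{eq:theta_def} is imposed in (SOC-ACOPF). The nonlinear images $\arcsin(\theta_l^*/(v_{s_l}^*v_{r_l}^*))$, however, satisfy the cycle condition only if
\[
\sum_{l\in\mathcal{C}}\pm\arcsin\!\left(\frac{\theta_l^*}{v_{s_l}^*v_{r_l}^*}\right)=0
\]
for every cycle $\mathcal{C}$, and nothing in \eqref{eq:Zhao_SOC} enforces this. The first thing I would try is to exploit the structure of $\arcsin$ near zero together with voltage magnitudes close to $1$. I expect this to fail in general, since a single cycle with unequal products $v_{s_l}^*v_{r_l}^*$ or unequal magnitudes of $\theta_l^*$ already breaks the identity.

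The conclusion of the plan is therefore conditional. The argument goes through for radial networks, or for meshed networks under the additional cycle condition above (a KVL-type consistency check on the recovered angles, in the spirit of Theorem~2 of \cite{Low_PartI}). The same cycle issue also undermines the relaxation claim of Theorem~\ref{th:theo1} on meshed graphs: mapping an o-ACOPF point to $\theta_l=v_{s_l}v_{r_l}\sin\theta_l$ need not preserve \eqref{eq:theta_def}. Without a valid relaxation, the lower-bound step of the optimality argument is also lost. I would therefore expect the statement, as written, to hold only for radial networks and to admit a counterexample on a meshed test case.
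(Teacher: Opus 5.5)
Your verdict matches the paper's. The paper gives no proof of this statement. Instead it argues the statement fails on meshed networks for two reasons. First, Lemma~\ref{lem:angle_incompatibility} shows that the $\arcsin$-recovered branch angles need not satisfy the cycle condition \eqref{eq:cyclic_con}, even when the cone is tight. Second, the IEEE 39-bus experiment shows that the map $\theta_l\mapsto v_{s_l}v_{r_l}\sin\theta_l$ breaks \eqref{eq:theta_def}, which removes the lower-bound step inherited from Theorem~\ref{th:theo1}. You identify \eqref{eq:theta_def} as the failing step, note both the recovery and relaxation obstructions, and conclude the claim survives only for radial networks or under an extra cycle-consistency condition; that is exactly the paper's reasoning.
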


\section{Discussion}\label{sec:discussion}
This section provides a counterargument to Theorem~1 in \cite{Zhao}. We show that there exist operating points that are feasible for the \ac{AC}-\ac{PF} (and hence belong to the feasible region of the corresponding \ac{AC}-\ac{OPF}, the (o-ACOPF) model in \cite{Zhao}), but that do not belong to the feasible set of the \ac{SOC}-\ac{OPF} model proposed in \cite{Zhao} (SOC-ACOPF) once the linearized angle relation \eqref{eq:soc_eq2d} is imposed together with angle consistency in meshed networks. Consequently, in meshed networks, (SOC-ACOPF) is an \emph{approximation} rather than a \emph{relaxation} of (o-ACOPF). This also undermines Theorems~2--4 in \cite{Zhao}, whose proofs build upon Theorem~1.
\vspace{-2mm}
\subsection{Theoretical considerations}
Constraint \eqref{eq:soc_eq2d} is obtained by linearizing the nonlinear relation \eqref{eq:KVL_phase} around the operating point $\theta_l \approx 0$ and $v_n \approx 1$. The proof strategy in Theorem~1 of \cite{Zhao} argues that, under the bounds $(\theta_l^{\min},\theta_l^{\max}) = \left(-\frac{\pi}{2},\frac{\pi}{2}\right)$ and $(v_n^{\min},v_n^{\max})=(0.9,1.1)$, the feasible set induced by the linear constraint \eqref{eq:soc_eq2d} contains that of the nonlinear term $v_{s_l} v_{r_l}\sin(\theta_l)$, hence suggesting that the modified model is a relaxation. However, this argument does not account for cycle consistency of voltage angles \eqref{eq:cyclic_con} in meshed networks.
\begin{equation}\label{eq:cyclic_con}
    \sum_{l\in\mathscr{C}} \theta_l = 0 \bmod 2\pi, \qquad \forall \text{ cycles } \mathscr{C} 
\end{equation}

In \cite{Zhao} it is stated that, introducing $\theta_n$ as decision variables and defining $\theta_l$ through \eqref{eq:theta_def}, ``implicitly'' enforces the cyclic condition \eqref{eq:cyclic_con}. Lemma~\ref{lem:angle_incompatibility} highlights where the relaxation claim fails for meshed grids.

\begin{lemma}\label{lem:angle_incompatibility}
Consider a meshed network and nodal voltage angle differences across branches defined by \eqref{eq:theta_def_incidence_matrices}. If the non-approximated branch \ac{KVL} equation \eqref{eq:KVL_phase} is replaced by the linearized constraint \eqref{eq:soc_eq2d} in the power flow equations, then the resulting branch angle differences are not guaranteed to satisfy the cycle constraint \eqref{eq:cyclic_con}.
\end{lemma}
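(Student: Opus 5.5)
The plan is to make the statement concrete by turning it into an explicit algebraic identity on each cycle and then exhibiting a single meshed instance where that identity fails. Let $\mathscr{C}$ be any cycle with an orientation, and write its signed branch-cycle vector as $c_l\in\{-1,0,1\}$. By \eqref{eq:theta_def_incidence_matrices} we have $\sum_l c_l \theta_l = c^\top (A_s-A_r)\theta_n = 0$. This is because $(A_s-A_r)^\top c = 0$ for every cycle vector. Note that this holds \emph{exactly}, not merely modulo $2\pi$. Combining it with \eqref{eq:soc_eq2d} shows that every feasible point of (SOC-ACOPF) satisfies
\begin{equation*}
\sum_{l} c_l \left(X_l p_{s_l} - R_l q_{s_l}\right) = 0 \qquad \forall\, \mathscr{C}.
\end{equation*}
On a genuine AC operating point, \eqref{eq:KVL_phase} states that $X_l p_{s_l}-R_l q_{s_l} = v_{s_l}v_{r_l}\sin\theta_l^{\mathrm{true}}$. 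Here the true angle differences $\theta_l^{\mathrm{true}}$ satisfy \eqref{eq:cyclic_con}. The variable $\theta_l$ appearing in \eqref{eq:soc_eq2d} is therefore really a surrogate for $v_{s_l}v_{r_l}\sin\theta_l^{\mathrm{true}}$. The cycle condition imposed on it, $\sum c_l v_{s_l}v_{r_l}\sin\theta_l^{\mathrm{true}}=0$, is a different condition from $\sum c_l\theta_l^{\mathrm{true}}=0 \bmod 2\pi$.

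The key steps are as follows.
\begin{enumerate}
\item Derive the displayed identity above.
\item Observe that the angles recovered as in Theorem~\ref{th:theo4}, namely $\arcsin\bigl(\theta_l/(v_{s_l}v_{r_l})\bigr)$, are the only physically meaningful branch angles. Because $\arcsin$ is nonlinear, they have no reason to sum to zero around a cycle, so \eqref{eq:cyclic_con} is not guaranteed.
\item Build a counterexample on the smallest mesh: a 3-bus ring with branches $1\to2$, $2\to3$, $3\to1$. Take flat voltage magnitudes $v_n=1$ and true angle differences $\theta_{12}=\theta_{23}=\pi/6$, $\theta_{31}=-\pi/3$, which satisfy \eqref{eq:cyclic_con}.
\item Compute $\sum_l \sin\theta_l = \tfrac12+\tfrac12-\tfrac{\sqrt3}{2}\neq 0$. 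Hence the $\theta_l$ forced by \eqref{eq:soc_eq2d} cannot come from any nodal $\theta_n$. Conversely, any $\theta_n$-consistent choice in the linearized model maps back through $\arcsin$ to angles violating \eqref{eq:cyclic_con}.
\end{enumerate}

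The step I expect to need the most care is showing that this operating point is a legitimate AC-PF solution rather than just a choice of angles. To do this I will start from the voltage phasors $V_n = e^{j\theta_n}$ and compute the branch currents from $V_i - V_j = z_{ij} I_{ij}$. I then obtain the flows $S_{ij}$ and the losses, and define the nodal injections $p_n,q_n$ from the balance equations \eqref{eq:Zhao_PF1}--\eqref{eq:Zhao_PF2}. By construction, every equation of \eqref{eq:Zhao_PF} is then satisfied.

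Finally, I will choose line impedances, for example identical $R_l,X_l$, and loose operational limits so that the point lies within the bounds of Theorem~\ref{th:theo1}: $|\theta_l|<\pi/2$ and $v_n\in(0.9,1.1)$. If the injection bounds become an issue, scaling the angles down, for instance $a=b=\epsilon$, still gives $2\sin\epsilon-\sin 2\epsilon\neq 0$ for every $\epsilon\in(0,\pi/4)$. So the violation persists in any regime, and this instance establishes the lemma.
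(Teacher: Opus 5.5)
Your proposal is correct. Its core mechanism is the same as the paper's: \eqref{eq:theta_def_incidence_matrices} forces an exact \emph{linear} cycle condition on the surrogate $\theta_l = X_l p_{s_l} - R_l q_{s_l}$, whereas physical consistency is a cycle condition on $\arcsin(\theta_l/(v_{s_l}v_{r_l}))$, and the nonlinearity of $\sin$ and $\arcsin$ makes the two incompatible. You also state explicitly something the paper leaves implicit: the $\theta_l$ of \eqref{eq:soc_eq2d} sum to zero exactly by construction, so the lemma is really about the physical angles, equivalently about which power flows the linearized model admits.

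Where you differ is in how you establish that a violating instance exists. The paper argues qualitatively. It contrasts the square, full-rank reduced incidence matrix of a radial network with the overdetermined system of a meshed one, and claims that the $\arcsin$-transformed system ``may not'' be solvable outside degenerate cases. It then relies on the IEEE 39-bus least-squares test to show that $v^0_{s_l}v^0_{r_l}\sin\theta^0_l$ is not in the range of $A_{s_ln}-A_{r_ln}$. You instead write the hidden constraint as $\sum_l c_l (X_l p_{s_l} - R_l q_{s_l}) = 0$ for every cycle vector $c$. You then give a closed-form single-cycle witness whose defect $1-\sqrt{3}/2$ can be checked by hand. This is a fully rigorous proof of the lemma, and a direct refutation of Theorem~\ref{th:theo1}, with no dependence on numerical tolerances.

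Your scaled family also sharpens the paper's degenerate-case remark. Even with $v_{s_l}v_{r_l}=1$ exactly, the defect $2\sin\epsilon-\sin 2\epsilon = 2\sin\epsilon(1-\cos\epsilon)$ is nonzero for every $\epsilon\in(0,\pi/4)$, but only of order $\epsilon^3$. This explains why the model approximates well near flat angles yet is never exactly consistent there. What the paper's route offers in exchange is a realistic magnitude estimate of the error on a standard meshed benchmark.

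One sentence in your step~4 should be weakened. It is not true that \emph{any} $\theta_n$-consistent choice in the linearized model maps back through $\arcsin$ to cycle-inconsistent angles. For example, $\theta_l=0$, or $\theta_l=(a,-a,0)$ on your ring with unit voltages, maps to a consistent set. The lemma only needs existence, which the first half of step~4 already provides, so the proof is unaffected.
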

\begin{proof}
In radial networks, for \emph{any} $\theta_l$, the nodal angles $\theta_n$ are determined uniquely up to the slack reference, since $A_{s_ln}-A_{r_ln}$ is full rank and the Rouché--Capelli theorem holds by construction. In meshed networks, by contrast, $\theta_l$ must satisfy the additional cycle constraints \eqref{eq:cyclic_con}. Even when the conic inequality \eqref{eq:soc_eq2b} is tight (i.e., the relaxation gap is zero), if the (overdetermined) system $\left(A_{s_ln}-A_{r_ln},\,\theta_l\right)$ satisfies a solution $\theta_n$, the system $\left(A_{s_ln}-A_{r_ln},\, \arcsin{\left(\frac{\theta_l}{(A_{s_ln}-A_{r_ln}) \sqrt{V_n}}\right)}\right)$
may not be satisfied, unless \eqref{eq:soc_eq2d} and \eqref{eq:KVL_phase} are simultaneously satisfied. This can occur only in degenerate cases, i.e., when $v_{s_l}v_{r_l}=1$ and $\theta_l\approx 0$.
\end{proof}

\vspace{-2mm}
\subsection{Numerical counterexample}
We consider two standard IEEE test systems in \texttt{pandapower}: the IEEE 33-bus system (radial) and the IEEE 39-bus system (meshed). For each case, we start from the \texttt{pandapower} \ac{AC}-\ac{PF} solution, denoted by $(p_{n}^0, q_{n}^0, v_{n}^0, \theta_{n}^0)$.

\paragraph{Step 1: verify feasibility in the non-relaxed (o-ACOPF) model}
We substitute $(p_{n}^0, q_{n}^0, v_{n}^0, \theta_{n}^0)$ into the non-relaxed formulation of \cite{Zhao} in \eqref{eq:Zhao_PF} by rewriting it as the linear system $A x = b$ in \eqref{eq:Axb}.
\begin{equation}\label{eq:Axb}
{\footnotesize
    \begin{bmatrix}
        A_{nl}^+ & Z_n & -A_{nl}^- & Z_n \\
        Z_n & A_{nl}^+ & Z_n & -A_{nl}^- \\
        2R & 2X & -R & -X \\
        X & -R & Z_l & Z_l \\
        Z_l & Z_l & X & -R 
    \end{bmatrix}
    \begin{bmatrix}
        p_{s_l} \\ q_{s_l} \\ p_{o_l} \\ q_{o_l}
    \end{bmatrix}
    =
    \begin{bmatrix}
        p_{n}^0 -  G_n V_{n}^0 \\
        q_{n}^0 + B_n V_{n}^0 \\
        A_l V_{n}^0 \\
        A_l v_{n}^0 \sin(\theta_{l}^0) \\
        \mathbf{0}_{|\mathscr{L}|}
    \end{bmatrix}}
\end{equation}
Here $Z_n := \mathbf{0}_{|\mathscr{N}|\times|\mathscr{L}|}$ and $Z_l := \mathbf{0}_{|\mathscr{L}|\times|\mathscr{L}|}$ denote zero matrices, $R:=\operatorname{diag}(R_l)$, $X:=\operatorname{diag}(X_l)$, and $A_l := A_{s_ln} - A_{r_ln}$. The branch angle differences are obtained from bus angles using their definition, $\theta_{l}^0 = A_l \theta_{n}^0$. We solve \eqref{eq:Axb} numerically using a least-squares method and verify that the residuals are (up to numerical tolerances) zero.

\paragraph{Step 2: test cycle consistency of the voltage angle differences across branches implied by the (SOC-ACOPF) model}
We now construct the voltage angle differences across branches under the (SOC-ACOPF) model, i.e. $\theta_{l}^1 := v_{s_l}^0 v_{r_l}^0\sin(\theta_{l}^0)$.
We then test whether it is compatible with a set of bus voltage angles by checking if there exists $\theta_{n}^1$ such that
$(A_{sl}-A_{rl})\theta_{n}^1 = \theta_{l}^1$. We solve this system in least squares after removing the slack-bus column (gauge invariance), yielding an exact square system for the radial case and an overdetermined system for the meshed network. The residual norm is zero if and only if the cycle constraints are satisfied.

Residual distributions for both networks are reported in Fig.~\ref{fig:radial_vs_meshed_boxplot}. For the system $(A,b)$, residuals are numerically zero in both cases, confirming (o-ACOPF) feasibility of the \texttt{pandapower} operating points. For the system $(C,d)$ with $C=A_{s_ln}-A_{r_ln}$ and $d=\theta_{l}^1$, residuals are numerically zero for the radial network (as expected), whereas they are non-negligible for the meshed network, with a maximum residual of $0.0245$ degrees. This deviation, when propagated through the \ac{AC}-\ac{PF} equations, yields branch power-flow errors on the order of $0.5$~p.u., indicating that $\theta_{l}^1$ violates cycle consistency.

\begin{figure}
    \centering
    \includegraphics[width=0.6\linewidth]{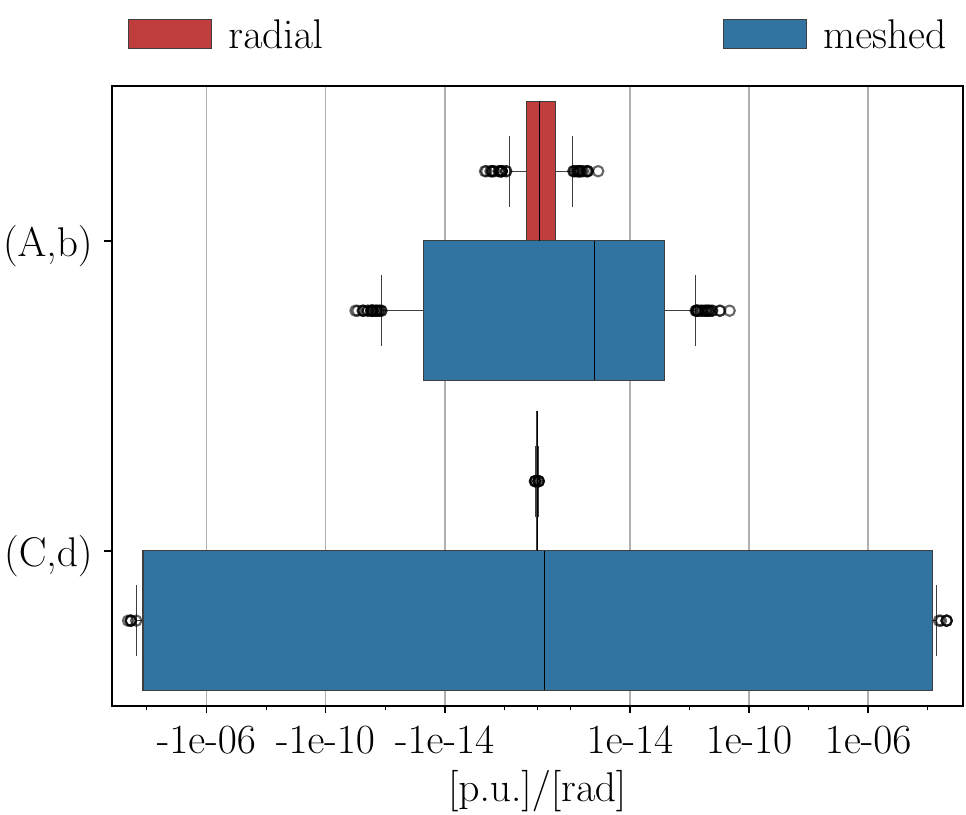}
    \caption{Residuals when solving the linear systems $(A x \approx b)$ and $(C\theta_n \approx d)$, for the radial IEEE 33-bus (red) and meshed IEEE 39-bus (blue) networks.}
    \label{fig:radial_vs_meshed_boxplot}\vspace{-5mm}
\end{figure}

\section{Conclusions}\label{sec:conclusions}
This letter shows that, in meshed networks, the \ac{SOC}-\ac{OPF} formulation in \cite{Zhao} cannot guarantee \ac{AC}-\ac{OPF} feasibility, even when the conic relaxation of the quadratic loss equality is tight.
At the same time, \cite{Zhao} provides a practically relevant way to incorporate voltage angles directly into the \ac{OPF} constraints. Although this yields an approximation of the \ac{AC}-\ac{OPF}, the resulting operating point is often close to an \ac{AC}-\ac{PF}-feasible one. 

By contrast, formulations such as \cite{Low_PartI} rely on a posteriori cycle-consistency checks and angle recovery, which may fail and thus yield a solution that is infeasible or arbitrarily inaccurate. The approach in \cite{Low_PartII} instead restores exactness in meshed networks via phase shifters, but at a cost that may be of questionable economic viability.

In both cases, an a posteriori procedure is needed to obtain an \ac{AC}-\ac{OPF} feasible solution even when the relaxation is tight.

\bibliographystyle{IEEEtran}
\bibliography{references.bib}

\end{document}